\documentclass[11pt]{article}

\usepackage[T1]{fontenc}
\usepackage[utf8]{inputenc}
\usepackage{lmodern}
\usepackage{geometry}
\usepackage{microtype}
\usepackage{amsmath,amssymb,amsthm}
\usepackage{mathtools}
\usepackage{booktabs}
\usepackage{array}
\usepackage{graphicx}
\usepackage{subcaption}
\usepackage{natbib}
\usepackage{hyperref}
\usepackage{xcolor}
\usepackage{setspace}

\hypersetup{
  colorlinks=true,
  linkcolor=blue!50!black,
  citecolor=blue!50!black,
  urlcolor=blue!50!black,
  pdftitle={What Does a Benford Test Actually Test? Marginal Conformity, Sampling Structure,\\ and Forensic Inference},
  pdfauthor={Arthur Charpentier}
}

\newtheorem{proposition}{Proposition}
\newcommand{\Unif}{\operatorname{Unif}}
\newcommand{\Law}{\mathcal{L}}
\newcommand{\E}{\mathbb{E}}
\newcommand{\Pp}{\mathbb{P}}

\title{\textbf{What Does a Benford Test Actually Test?\\
Marginal Conformity, Sampling Structure,\\and Forensic Inference}}
\author{Arthur Charpentier\\
Université du Québec à Montréal (UQAM), Canada\\
Kyoto University, Japan\\[0.5em]
\small Corresponding author: \texttt{charpentier.arthur@uqam.ca}}
\date{}

\begin{document}
\maketitle

\begin{abstract}
Benford's law specifies a marginal distribution for significant digits, whereas the usual first-digit Pearson $p$-value is calibrated under an independent multinomial sampling model. We separate these statements with four constructions that share the same one-time or pooled Benford target but have different joint structures. Under a wrapped-Gaussian circular Markov construction, the nominal 5\% Pearson test rejects 21.6\% of samples at a fixed persistence level even though every one-time marginal is exactly Benford; a sequence-level second-order moment-matching calibration reduces the rejection rate to 4.8\%. The same strategy performs well in the randomized-rotation and random-composition designs. The effect persists across changes in sample size and sequence length, and a grid approximation to a continuous log-significand Cram\'er--von Mises discrepancy shows the same design dependence. In the wrapped-Gaussian design examined here, corrected procedures retain rejection probabilities that rise with the size of a smooth marginal departure. Finally, we distinguish calibration from an information boundary: significand-only methods cannot detect changes that leave the complete significand process unchanged, but they readily detect perturbations that alter it. Benford conformity is a marginal statement; a Benford $p$-value is valid only relative to a specified statistic, sampling law, and calibration procedure; an integrity claim requires substantive competing models.
\end{abstract}

\noindent\textbf{Keywords:} Benford's law; goodness-of-fit; complex sampling; dependence; Rao--Scott correction; forensic statistics.

\section{Introduction}
\label{sec:intro}

Benford's law is often used as a simple benchmark for numerical data. Let $D_1$ denote the first significant decimal digit of a positive observation. The first-digit law is
\begin{equation}
\Pp(D_1=d)
=
\log_{10}\left(1+\frac{1}{d}\right),
\qquad d=1,\ldots,9.
\label{eq:firstdigit}
\end{equation}
The regularity goes back to \citet{newcomb1881} and \citet{benford1938}; scale invariance and uniformity modulo one provide its modern mathematical foundation \citep{pinkham1961,diaconis1977,bergerhill2011}. Its statistical appeal is immediate: equation~\eqref{eq:firstdigit} provides a low-dimensional reference distribution without requiring a full model for magnitudes.

That convenience has made Benford screening common in accounting, taxation, scientific-data assessment, international trade, and election forensics \citep{varian1972,nigrini1996,durtschi2004,diekmann2007,deckert2011}. The discussion has a long history in \emph{The American Statistician} itself: \citet{varian1972} presented Benford screening as a diagnostic, \citet{cho2007} examined its use in campaign-finance fraud detection, and \citet{fewster2009} gave an intuitive account of when Benford behavior should arise. A substantial methodological literature now studies goodness-of-fit procedures, continuous-significand tests, and fraud-oriented alternatives \citep{lesperance2016,barabesi2018,barabesi2022,barabesi2023,barabesi2026}. Our question precedes the choice among such statistics: \emph{what sampling model is implicit in the reported $p$-value?}

The issue is classical outside the Benford setting. Pearson statistics for categorical data are not generally calibrated by a standard chi-square law under clustering, complex sampling, or serial dependence. \citet{raoscott1981} showed that survey design can replace the usual chi-square limit by a weighted sum of independent $\chi^2_1$ variables, and subsequent work developed design corrections for categorical tables \citep{raoscott1984,raoscott1987}. Related effects arise for serially dependent categorical observations \citep{altham1979,tavarealtham1983}. We therefore do not claim a new weighted-chi-square limit or a new general design correction. Our contribution is instead threefold: to give exact Benford-preserving constructions that isolate sampling structure while holding the marginal target fixed; to show, analytically and numerically, how standard design-aware ideas change the resulting Benford calibration; and to separate that calibration question from the information available for a substantive forensic comparison.

The paper distinguishes three statements:
\[
\text{Benford marginal conformity},\qquad
\text{calibration of a Benford statistic},\qquad
\text{evidence about integrity}.
\]
A Benford marginal describes one observation. A conventional Pearson $p$-value describes a statistic under an independent multinomial sampling law. An integrity claim compares substantive data-generating mechanisms. These are related only when the observation model and the competing substantive hypotheses connect them.

We proceed in three stages. First, we express the Pearson statistic under a general covariance structure and construct four processes with the same Benford target but sharply different finite-sample behavior: i.i.d. sampling, a wrapped-Gaussian process on the logarithmic circle, an irrational rotation with random phase, and a balanced latent mixture. We then show that sequence-level covariance information substantially improves calibration through Rao--Scott-type moment matching and a sequence-aware Wald statistic, that the phenomenon persists across different $(B,L,n)$ designs and for a grid approximation to a continuous log-significand Cram\'er--von Mises discrepancy, and that corrected procedures retain power against a genuine marginal departure in the wrapped-Gaussian design. Finally, we ask what information remains after the marginal target has been matched: temporal order and conditioning variables can distinguish mechanisms, whereas any difference erased by the map from magnitudes to significands is unavailable to digit-only inference.

The controlled constructions are deliberate: they isolate the source of miscalibration without the confounding features of a particular audit or election. They do not imply that Benford screening is generally invalid. Digit evidence can be useful when a substantive legitimate-process model predicts Benford behavior, a relevant alternative predicts a departure, and the sampling calibration reflects how observations were obtained \citep{barabesi2018,barabesi2026}. The narrower claim is that those three ingredients should not be conflated.

The remainder of the paper develops this argument and ends with a short set of practical recommendations. Technical calculations and secondary numerical results are collected in the separate Supplementary Material.

\section{What a Benford test observes}
\label{sec:observes}

\subsection{Significand, log-significand, and first digit}

For a base $b\ge2$, every $X>0$ can be written uniquely as
\begin{equation}
X=b^{K+U},
\qquad
K=\lfloor\log_bX\rfloor\in\mathbb Z,
\qquad
U=\{\log_bX\}\in[0,1).
\label{eq:KU}
\end{equation}
Here $\{x\}=x-\lfloor x\rfloor$ denotes fractional part. The base-$b$ significand is
\[
S_b(X)=b^U\in[1,b),
\]
while $U$ is its \emph{log-significand}. The full Benford law is equivalent to
\begin{equation}
U\sim\Unif(0,1).
\label{eq:fullbenford}
\end{equation}
From this point onward we set $b=10$. The first-digit event $D_1=d$ is the interval event
\[
U\in A_d,
\qquad
A_d=[\log_{10}d,\log_{10}(d+1)),
\]
so that
\[
p_d
=
\Pp(D_1=d)
=
\Pp(U\in A_d)
=
|A_d|.
\]

We use three related terms. A process is \emph{pointwise marginally Benford} when $\Law(U_{s,t})=\Unif(0,1)$ for every recorded pair $(s,t)$. A collection is \emph{pooled Benford} when the distribution obtained by selecting an observation according to the stated pooling weights is uniform, even if the component distributions are not. A regime is \emph{conditionally Benford} when the uniform law holds after conditioning on that regime. Pointwise marginal or pooled Benfordness is weaker than independence and says nothing about the joint law of the recorded sequence.

\subsection{Pearson's statistic and the sampling covariance}

Let
\[
Y_j
=
\bigl(
\mathbf 1\{U_j\in A_1\},\ldots,
\mathbf 1\{U_j\in A_9\}
\bigr)^\top,
\qquad
\widehat p=\frac1n\sum_{j=1}^nY_j,
\]
and let $p=(p_1,\ldots,p_9)^\top$ and $D_p=\operatorname{diag}(p_1,\ldots,p_9)$. We write the conventional Pearson discrepancy as
\begin{equation}
X_n^2
=
n(\widehat p-p)^\top D_p^{-1}(\widehat p-p)
=
\sum_{d=1}^9\frac{(N_d-np_d)^2}{np_d}.
\label{eq:pearson}
\end{equation}
The notation separates the statistic $X_n^2$ from the distribution used to calibrate it.

Under independent Benford sampling,
\[
\sqrt n(\widehat p-p)
\Rightarrow
N_9(0,\Omega_{\mathrm{iid}}),
\qquad
\Omega_{\mathrm{iid}}=D_p-pp^\top.
\]
Consequently,
\[
X_n^2\Rightarrow\chi^2_8.
\]
The reference law is therefore a statement about the covariance of the empirical histogram, not merely about its mean.

More generally, suppose an observation design yields
\begin{equation}
\sqrt n(\widehat p-p)
\Rightarrow
N_9(0,\Omega),
\qquad
\Omega\mathbf 1=0.
\label{eq:clt-general}
\end{equation}
The relevant parameter space is the multinomial contrast space
\[
\mathcal C=\{x\in\mathbb R^9:\mathbf 1^\top x=0\}.
\]
Indeed, $D_p^{-1/2}\Omega D_p^{-1/2}$ has the null vector
$D_p^{1/2}\mathbf 1=(\sqrt{p_1},\ldots,\sqrt{p_9})^\top$. Let
$\lambda_1,\ldots,\lambda_r$, with $r\leq8$, denote its positive eigenvalues on the corresponding contrast space. Then the continuous mapping theorem gives
\begin{equation}
X_n^2
\Rightarrow
\sum_{j=1}^r\lambda_j\chi^2_{1,j},
\label{eq:weightedchisq}
\end{equation}
where the $\chi^2_{1,j}$ are independent. Under i.i.d. multinomial sampling $r=8$ and all eight positive eigenvalues equal one, recovering $\chi^2_8$. Under serial dependence, clustering, or additional degeneracy they need not.

Equation~\eqref{eq:weightedchisq} is classical in spirit rather than a new theorem of this paper. Closely related weighted-chi-square limits and design-effect corrections appear in the survey-sampling literature \citep{raoscott1981,raoscott1984,raoscott1987}, while \citet{tavarealtham1983} derives corrections for Markov-dependent categorical observations. Its role here is to make the statistical object explicit before specializing to Benford's law.

The same decomposition also clarifies the latent-composition example. Without assuming that $\E(\widehat p)=p$,
\begin{equation}
\E(X_n^2)
=
n\,\operatorname{tr}
\!\left[
D_p^{-1}\operatorname{Var}(\widehat p)
\right]
+
n\{\E(\widehat p)-p\}^\top
D_p^{-1}
\{\E(\widehat p)-p\}.
\label{eq:pearson-mean-general}
\end{equation}
Here $\operatorname{Var}(\widehat p)$ is the covariance under the actual observation design. The first term is a sampling-variance contribution; the second is displacement from the target. The primary benchmark fixes the target exactly and varies the first term. The random-composition experiment later shows how both terms can matter.

\subsection{From a marginal statement to a substantive claim}

A mechanism $M$ and observation design $\mathcal D_n$ generate
\[
X_{1:n}\sim P_{M,\mathcal D_n}.
\]
The Benford target concerns $\Law_M(U_j)$. The sampling law of a statistic concerns
\[
\Law_{M,\mathcal D_n}\{T(U_{1:n})\}.
\]
These are different levels of description:
\begin{equation}
\Law(U_j)=\Unif(0,1)
\quad\not\Rightarrow\quad
\Law(U_{1:n})=\Unif(0,1)^{\otimes n}.
\label{eq:marginal-not-joint}
\end{equation}

A forensic interpretation introduces another level. Let $H_F$ denote a substantive integrity hypothesis whose implications for the observed data must be specified. Schematically,
\[
M
\longrightarrow X_{1:n}
\longrightarrow U_{1:n}
\longrightarrow T
\longrightarrow \text{substantive conclusion}.
\]
Benford's law constrains the middle representation; it does not supply the last arrow. A digit discrepancy can be evidential when legitimate and manipulated models predict different digit behavior, as in established and recent antifraud procedures \citep{barabesi2018,barabesi2026}. Conversely, digit conformity cannot by itself certify integrity.

Many mechanisms can generate Benford behavior, including random products, deterministic dynamics, mixtures, fragmentation, and invariant Markov systems \citep{hill1995,millernigrini2008,bergerhill2011,becker2018,burgossantos2021}. Morrison's multiplication game provides a strikingly different example in which Benford/Haar randomization is strategically invariant \citep{morrison2010}; because this example is not needed for the calibration argument, its numerical illustrations are provided in the Supplementary Material, Section~S3.

\section{Same Benford target, different sampling laws}
\label{sec:benchmark}

\subsection{Four constructions with the same Benford target}

The benchmark uses $B=400$ independent sequences of length $L=250$, so the total record count is $n=BL=100{,}000$ per realization. When convenient, the flattened index $j=1,\ldots,n$ corresponds to $j=(s-1)L+t$. All transitions are defined within sequence boundaries. This large $n$ makes sparse-cell and small-sample $\chi^2$ approximations irrelevant in the i.i.d. reference, so departures from the conventional calibration can be attributed to sampling structure rather than to small expected counts. The final primary comparison uses 2,000 independent Monte Carlo realizations; the persistence sensitivity analysis uses 1,000 realizations at each value of $\rho$, and the random-composition experiment uses 2,000 realizations. No generator is calibrated to the Pearson statistic: $\rho=0.5$ is a fixed illustrative persistence level, and the irrational rotation uses the prespecified golden-ratio conjugate.

The first construction is an i.i.d. reference,
\begin{equation}
U_{s,t}\overset{\mathrm{iid}}{\sim}\Unif(0,1).
\label{eq:iid}
\end{equation}

The second is a stochastic multiplicative process represented on the logarithmic circle,
\begin{equation}
U_{s,t+1}=(U_{s,t}+\varepsilon_{s,t+1})\bmod1,
\label{eq:markov}
\end{equation}
with each sequence initialized from $\Unif(0,1)$. The innovations are obtained by wrapping centered Gaussian increments with variance $\sigma^2$. Their first circular Fourier coefficient is real and positive,
\begin{equation}
\rho
:=
\E\{e^{2\pi i\varepsilon}\}
=
e^{-2\pi^2\sigma^2}\in(0,1],
\label{eq:rho}
\end{equation}
because the unwrapped increment distribution is centered and symmetric. Thus $\rho$ is a first circular Fourier coefficient---a concentration parameter, not a linear correlation. More generally,
$\psi_k:=\E(e^{2\pi ik\varepsilon})=\rho^{k^2}$.
The primary comparison uses $\rho=0.5$. Uniform measure is invariant under addition on the circle, so $U_{s,t}$ is exactly uniform for every $s,t$, independently of $\rho$.

The third construction is a deterministic irrational rotation with randomized phase,
\begin{equation}
U_{s,t+1}=(U_{s,t}+\theta)\bmod1,
\qquad
\theta=\frac{\sqrt5-1}{2},
\label{eq:rotation}
\end{equation}
where the initial phase $U_{s,1}$ is drawn independently from $\Unif(0,1)$ across sequences. The probability statements in the benchmark are therefore with respect to this phase randomization. Conditional on $U_{s,1}$ the subsequent path is deterministic. Every one-time marginal is uniform, while along any fixed phase an irrational rotation is asymptotically equidistributed as $L\to\infty$. We deliberately use the golden-ratio conjugate, a canonical badly approximable irrational with favorable low-discrepancy properties \citep{kuipersniederreiter1974}. The extreme finite-$L$ under-dispersion below is therefore an illustrative low-discrepancy construction, not a generic claim about all irrational rotations; finite-$L$ discrepancy depends on $\theta$, the digit intervals, and the phase.

The fourth construction introduces latent heterogeneity. Exactly half of the sequences have $Z_s=0$ and half have $Z_s=1$, with observations conditionally independent and
\begin{equation}
f_{U\mid Z=0}(u)=2u,
\qquad
f_{U\mid Z=1}(u)=2(1-u),
\qquad 0<u<1.
\label{eq:latent}
\end{equation}
Neither conditional regime is Benford. Their equal-weight mixture is exactly uniform because
\[
\frac12(2u)+\frac12\,2(1-u)=1.
\]
The balanced design is an idealized control: it holds the realized mixture weight fixed at one half so that the primary comparison isolates conditional heterogeneity rather than random composition noise. Section~\ref{sec:composition} restores random regime composition and treats it as part of the sampling design.

\begin{proposition}[Exact marginal agreement]
\label{prop:marginal}
The i.i.d., stochastic multiplicative, and randomized irrational-rotation constructions are marginally Benford: $U_{s,t}\sim\Unif(0,1)$ at every observation time. The balanced latent construction is pooled Benford under the stated equal weights, although neither conditional regime is Benford.
\end{proposition}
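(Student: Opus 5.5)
The plan is to treat the four constructions one at a time. The three marginal claims will all follow from a single invariance fact: the uniform law on the circle $\mathbb R/\mathbb Z$ (Haar measure) is invariant under every translation. In addition, if $U\sim\Unif(0,1)$ is independent of a random shift $V$ with an arbitrary law, then $(U+V)\bmod 1\sim\Unif(0,1)$. I would prove this by conditioning on $V=v$. For fixed $v$, the map $u\mapsto(u+v)\bmod1$ preserves Lebesgue measure on $[0,1)$, since it splits $[0,1)$ into two intervals and translates each of them. The conditional law is therefore uniform for every $v$, and integrating over the law of $V$ gives the claim. An alternative route is through Fourier coefficients: $\E\{e^{2\pi ik(U+V)}\}=\E\{e^{2\pi ikU}\}\E\{e^{2\pi ikV}\}=0$ for $k\neq0$, and a law on the circle is uniform exactly when all its nonzero Fourier coefficients vanish.

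With this in hand, the cases are handled as follows.

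\emph{I.i.d. construction.} Here \eqref{eq:iid} states the conclusion directly.

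\emph{Wrapped-Gaussian construction.} I would induct on $t$ within each sequence. The base case is $U_{s,1}\sim\Unif(0,1)$, which holds by the initialization. For the inductive step, assume $U_{s,t}\sim\Unif(0,1)$. The innovation $\varepsilon_{s,t+1}$ is independent of $U_{s,t}$; I will make this independence explicit, reading \eqref{eq:markov} as driven by fresh innovations that are independent of the past. The lemma then gives $U_{s,t+1}\sim\Unif(0,1)$. Nothing about the Gaussian shape or about $\rho$ enters, which matches the remark in the text that uniformity holds independently of $\rho$.

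\emph{Irrational rotation.} Iterating \eqref{eq:rotation} gives $U_{s,t}=(U_{s,1}+(t-1)\theta)\bmod1$. This is a deterministic translation of a uniform phase, so the lemma with $V$ a constant yields uniformity at every $t$. Irrationality of $\theta$ is not needed for this marginal statement; it matters only for the equidistribution remarks elsewhere in the paper.

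\emph{Balanced latent construction.} Pooled Benfordness means that the law of the observation obtained by picking a sequence according to the weights $\tfrac12,\tfrac12$ on the two regimes, together with any time index, is uniform. Since exactly half of the sequences have each label, this pooled law has density $\tfrac12 f_{U\mid Z=0}+\tfrac12 f_{U\mid Z=1}=u+(1-u)=1$ on $(0,1)$, which is uniform. To show that neither conditional regime is Benford, it suffices to exhibit a deviation. For instance, $\Pp(U\in A_1\mid Z=0)=(\log_{10}2)^2\neq\log_{10}2$. Alternatively, one can note that the densities $2u$ and $2(1-u)$ are not constant on any set of positive measure.

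I do not expect a real obstacle here. The only points requiring care are the following. First, the independence of each innovation from the current state must be stated explicitly for the Markov case. Second, the phrase ``pooled under the stated weights'' must be made precise, as a uniformly random draw of a sequence, or equivalently of a record, from the fixed balanced design. It must not be read as a claim about any individual sequence's law.
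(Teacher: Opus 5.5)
Your proposal is correct and follows the paper's argument: translation invariance of the uniform law on the circle carries a uniform initial state through every update of the wrapped-Gaussian and rotation constructions, and the equal-weight mixture density $\tfrac12(2u)+\tfrac12\cdot 2(1-u)$ is identically one. Your additions only spell out what the paper leaves implicit, namely that each innovation is independent of the current state, and a concrete check such as $\Pp(U\in A_1\mid Z=0)=(\log_{10}2)^2\neq\log_{10}2$ showing that neither regime is Benford.
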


\begin{proof}
The i.i.d. statement is immediate. For the multiplicative process and the rotation, uniform measure on the circle is invariant under translation, so a uniform initial state remains uniform after every update. For the latent construction, the pooled density is the equal-weight mixture in equation~\eqref{eq:latent}, which is identically one.
\end{proof}

The proposition fixes the one-observation target while leaving the sampling law of empirical summaries unrestricted. Figure~\ref{fig:marginal-sampling} shows the central contrast. The first-digit frequencies are visually indistinguishable from the same population Benford probabilities, while the sampling distributions of the Pearson statistic differ sharply.

\begin{figure}[htbp]
\centering
\begin{subfigure}[t]{0.49\textwidth}
\centering
\includegraphics[width=\textwidth]{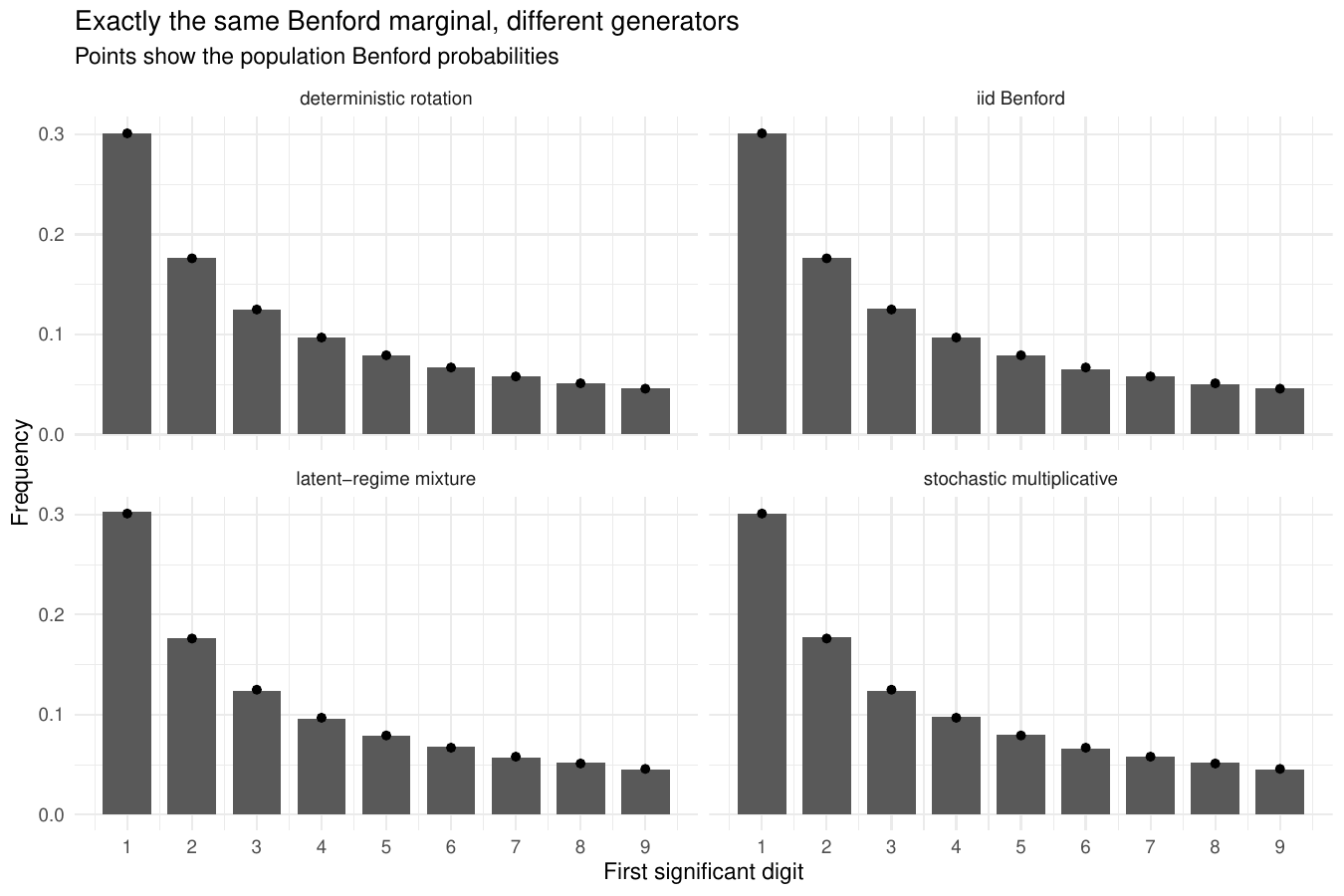}
\caption{Representative first-digit frequencies.}
\end{subfigure}\hfill
\begin{subfigure}[t]{0.49\textwidth}
\centering
\includegraphics[width=\textwidth]{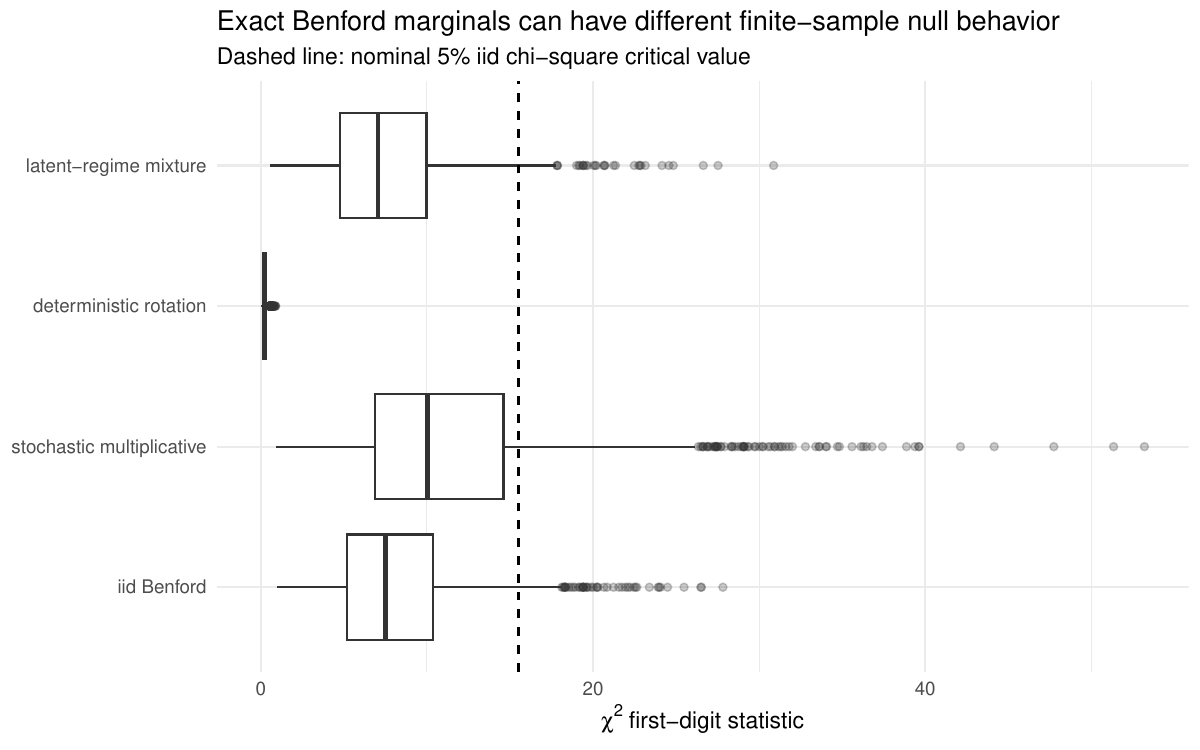}
\caption{Sampling distribution of $X_n^2$.}
\end{subfigure}
\caption{The four constructions agree exactly on the relevant one-observation or pooled Benford target but have different finite-sample behavior. In the embedded generator labels, ``deterministic rotation'' denotes the rotation with independently randomized initial phase and ``latent-regime mixture'' denotes the exactly balanced mixture. The dashed line in panel (b) is the conventional 5\% $\chi^2_8$ critical value.}
\label{fig:marginal-sampling}
\end{figure}

\subsection{A nominal i.i.d. calibration can reject a marginally Benford process}

Table~\ref{tab:main} reports both the analytical expectation of $X_n^2$ and the Monte Carlo results. The analytical calculations use the known covariance structure of the digit indicators; details are in the Supplementary Material, Section~S1. The agreement with simulation is close in every construction.

\begin{table}[htbp]
\centering
\caption{Common-target Benford benchmark. Monte Carlo entries are based on 2,000 realizations of $n=100{,}000$ observations. Standard deviations are in parentheses.}
\label{tab:main}
\begin{tabular}{lrrr}
\toprule
Construction & $\E(X_n^2)$ theory & MC $X_n^2$ & Reject .05 \\
\midrule
i.i.d. Benford & 8.000 & 8.111 (4.086) & 0.052 \\
Stochastic multiplicative & 11.711 & 11.511 (6.620) & 0.216 \\
Randomized irrational rotation & 0.219 & 0.223 (0.132) & 0.000 \\
Balanced latent mixture & 7.679 & 7.676 (3.941) & 0.038 \\
\bottomrule
\end{tabular}
\end{table}

With 2,000 primary replications, the Monte Carlo standard error of each rejection rate in Table~\ref{tab:main} is below 0.01. The analytical column concerns the expectation of the Pearson statistic; the rejection probabilities use the simulated sampling distributions.

The i.i.d. construction is calibrated as expected. The wrapped-Gaussian multiplicative construction has the same one-time Benford probabilities, but at $\rho=0.5$ the nominal test rejects 21.6\% of samples. The randomized rotation produces the opposite effect: its low-discrepancy orbit is strongly under-dispersed relative to independent sampling, so none of the 2,000 realizations crosses the conventional critical value. Neither result is a statement about the Benford marginal, which is fixed by construction. They are statements about the sampling variability around that marginal.

The persistence sweep reinforces the point without suggesting that all dependence inflates the statistic. In the wrapped-Gaussian construction the naive rejection rate rises above 50\% by $\rho=0.75$ and approaches 90\% at $\rho=0.9$, whereas the rotation shows that a different joint structure can produce severe deflation. Marginal conformity alone does not determine a null distribution.

\subsection{Design-aware calibration}
\label{sec:design-aware}

The weighted-chi-square representation in equation~\eqref{eq:weightedchisq} also suggests a correction when the independent sampling units are observed. In the benchmark those units are the $B$ sequences, not the $BL$ individual records. For sequence $s$, define
\[
G_s=\sum_{t=1}^L(Y_{s,t}-p).
\]
The following elementary sequence-level limit states the asymptotic regime used by the correction.

\begin{proposition}[Sequence-level Gaussian limit]
\label{prop:cluster-clt}
Suppose $L$ is fixed, $G_1,\ldots,G_B$ are i.i.d. across sequences, $\E(G_s)=0$, and $\E\|G_s\|^2<\infty$. As $B\to\infty$,
\[
\sqrt{BL}(\widehat p-p)
=
\frac{1}{\sqrt L}\frac{1}{\sqrt B}\sum_{s=1}^B G_s
\Rightarrow
N_9(0,\Omega_L),
\qquad
\Omega_L=\frac{1}{L}\operatorname{Var}(G_s).
\]
If $D_p^{-1/2}\Omega_LD_p^{-1/2}$ has rank $r$ on the multinomial contrast space, then
\[
X_n^2\Rightarrow\sum_{j=1}^r\lambda_j\chi^2_{1,j},
\]
where $\lambda_1,\ldots,\lambda_r$ are its positive eigenvalues.
\end{proposition}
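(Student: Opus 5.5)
The proof reduces to two standard steps: a multivariate central limit theorem for the i.i.d. sequence sums $G_s$, followed by a continuous-mapping argument applied to the Pearson quadratic form. First, I will verify the displayed identity. Since $\widehat p-p=\frac{1}{BL}\sum_{s=1}^B\sum_{t=1}^L(Y_{s,t}-p)=\frac{1}{BL}\sum_s G_s$, multiplying by $\sqrt{BL}$ gives $\frac{1}{\sqrt L}\frac{1}{\sqrt B}\sum_s G_s$. The $G_s$ are i.i.d. in $\mathbb R^9$, centered, with finite second moments, so the multivariate Lindeberg--L\'evy CLT (via Cram\'er--Wold) gives $B^{-1/2}\sum_s G_s\Rightarrow N_9(0,\operatorname{Var}(G_s))$. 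Scaling by the fixed constant $L^{-1/2}$ then yields the limit $N_9(0,\Omega_L)$ with $\Omega_L=L^{-1}\operatorname{Var}(G_s)$. Note that $L$ is held fixed, so within-sequence dependence is absorbed entirely into $\operatorname{Var}(G_s)$; no mixing condition along $t$ is needed.

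Second, I will check the degeneracy $\Omega_L\mathbf 1=0$, which places us in the setting of equation~\eqref{eq:clt-general}. Each $Y_{s,t}$ has exactly one entry equal to one, so $\mathbf 1^\top(Y_{s,t}-p)=1-1=0$ almost surely. Hence $\mathbf 1^\top G_s=0$ almost surely, and therefore $\operatorname{Var}(G_s)\mathbf 1=0$.

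Third, write $X_n^2=\|Z_n\|^2$ with $Z_n=D_p^{-1/2}\sqrt n(\widehat p-p)$, where $n=BL$. Since $x\mapsto D_p^{-1/2}x$ is linear, the continuous mapping theorem gives $Z_n\Rightarrow Z\sim N_9(0,\Sigma)$ with $\Sigma=D_p^{-1/2}\Omega_LD_p^{-1/2}$, and then $X_n^2\Rightarrow\|Z\|^2$. Next, diagonalize the symmetric positive semidefinite matrix $\Sigma=\sum_j\lambda_j e_je_j^\top$ with orthonormal $e_j$. The vector $D_p^{1/2}\mathbf 1$ lies in its null space, so the positive eigenvalues, $r\le 8$ of them, have eigenvectors in the contrast direction described in the paper. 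Writing $Z=\sum_{j\le r}\sqrt{\lambda_j}\,\xi_j e_j$ with $\xi_j$ i.i.d. $N(0,1)$ then gives $\|Z\|^2=\sum_{j\le r}\lambda_j\xi_j^2$, which is the stated weighted chi-square law.

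None of these steps is a genuine obstacle. The only point that needs care is interpretive: "rank $r$ on the contrast space" should be read as the number of positive eigenvalues of $\Sigma$. Zero eigenvalues, including the one forced by $D_p^{1/2}\mathbf 1$, contribute nothing to $\|Z\|^2$, so the result holds regardless of how the null space is described.
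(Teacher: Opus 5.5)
Your proof is correct and follows the same route as the paper, which simply applies the multivariate central limit theorem to the i.i.d.\ sequence summaries $G_s$ and then the continuous mapping theorem as in equation~\eqref{eq:weightedchisq}. You also fill in details the paper leaves implicit, and each is sound: the identity $\sqrt{BL}(\widehat p-p)=L^{-1/2}B^{-1/2}\sum_s G_s$, the degeneracy $\Omega_L\mathbf 1=0$ from $\mathbf 1^\top(Y_{s,t}-p)=0$, and the spectral decomposition of $D_p^{-1/2}\Omega_L D_p^{-1/2}$ with $D_p^{1/2}\mathbf 1$ in its null space.
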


\begin{proof}
The first statement is the multivariate central limit theorem applied to the independent sequence summaries $G_s$. The quadratic-form limit follows from the continuous mapping theorem exactly as in equation~\eqref{eq:weightedchisq}.
\end{proof}

For identically distributed independent sequences, the covariance of the empirical digit histogram can therefore be estimated from the between-sequence covariance of the $G_s$. The proposition is an asymptotic statement in the number of independent sequences $B$, not in the number of within-sequence records alone. It also covers the randomized rotation for fixed $L$: conditional on its phase a sequence is deterministic, but the phase-randomized summaries $G_s$ are i.i.d. and bounded across sequences. If $L$ grows with $B$, additional conditions on the triangular array of sequence summaries are needed and are not claimed here.

In the balanced latent design, where the 50/50 regime allocation is fixed rather than random, we estimate the covariance within each regime and combine the two components with their fixed weights.

Let $\widehat\lambda_1,\ldots,\widehat\lambda_{\widehat r}$ be the positive estimated design-effect eigenvalues corresponding to equation~\eqref{eq:weightedchisq}. A first-order Rao--Scott-type correction replaces the heterogeneous spectrum by its mean,
\[
\overline\lambda=\frac{1}{\widehat r}\sum_{j=1}^{\widehat r}\widehat\lambda_j,
\qquad
\frac{X_n^2}{\overline\lambda}\approx\chi^2_{\widehat r}.
\]
We also use a second-order moment-matching approximation, in the spirit of Rao--Scott and Satterthwaite \citep{raoscott1984,satterthwaite1946}, that retains the first two moments of the estimated weighted sum:
\begin{equation}
\widehat a=
\frac{\sum_j\widehat\lambda_j^2}{\sum_j\widehat\lambda_j},
\qquad
\widehat\nu=
\frac{(\sum_j\widehat\lambda_j)^2}{\sum_j\widehat\lambda_j^2},
\qquad
\frac{X_n^2}{\widehat a}\approx\chi^2_{\widehat\nu}.
\label{eq:rs2}
\end{equation}
The approximation treats the estimated spectrum as a plug-in quantity; it is not an exact finite-sample pivot. For comparison, we also form a sequence-aware Wald statistic on eight independent digit contrasts using the same estimated covariance. These procedures are not proposed as universal Benford tests; they are design-aware calibrations for the specific setting in which independent sequence boundaries are known. In the implementation, the nine-category problem is represented by eight free cell proportions, the estimated covariance is symmetrized, numerically negligible negative eigenvalues are truncated at zero, and the Wald statistic uses a Moore--Penrose inverse with a fixed numerical tolerance. These choices are documented in the companion code.

Table~\ref{tab:design-aware} reports the finite-sample result. The second-order correction is close to the nominal 5\% level in all four reported constructions: 5.05\% for i.i.d. sampling, 4.75\% for the multiplicative process, 5.10\% for the randomized rotation, and 4.55\% for the balanced latent mixture. The sequence-aware Wald test is also close, although mildly anti-conservative in these finite-$B$ experiments (5.45--6.40\%). By contrast, the first-order correction is less reliable when the eigenvalue spectrum is far from a common scalar multiple of the i.i.d. covariance.

\begin{table}[htbp]
\centering
\caption{Monte Carlo rejection rates at nominal 5\% in an independently seeded calibration experiment under the conventional i.i.d. Pearson reference and three sequence-aware calibrations. Small differences from Table~\ref{tab:main} are Monte Carlo variation. The final columns summarize the estimated design-effect spectrum.}
\label{tab:design-aware}
\begin{tabular}{lrrrrrr}
\toprule
Construction & Naive & RS1 & RS2 & Wald & Mean $\widehat\lambda$ & RS2 df \\
\midrule
i.i.d. Benford & 0.052 & 0.052 & 0.051 & 0.062 & 1.000 & 7.83 \\
Stochastic multiplicative & 0.217 & 0.066 & 0.048 & 0.056 & 1.463 & 6.22 \\
Randomized irrational rotation & 0.000 & 0.080 & 0.051 & 0.055 & 0.027 & 5.08 \\
Balanced latent mixture & 0.039 & 0.050 & 0.046 & 0.064 & 0.961 & 7.71 \\
\bottomrule
\end{tabular}
\end{table}

The spectrum is itself informative. For the multiplicative construction the average design effect is 1.46 and the largest estimated eigenvalue is about 2.99; for the rotation the corresponding mean is only 0.027. The second-order effective degrees of freedom therefore fall below eight because the distortion is not merely a common variance multiplier. This is why a scalar first-order adjustment does not fully repair every design.

Figure~\ref{fig:rho-calibration} makes the practical implication visible. As circular persistence increases, the conventional Pearson rejection rate rises sharply, whereas the second-order correction and the sequence-aware Wald calibration remain close to the nominal level across the reported simulation range.

\begin{figure}[htbp]
\centering
\includegraphics[width=0.76\textwidth]{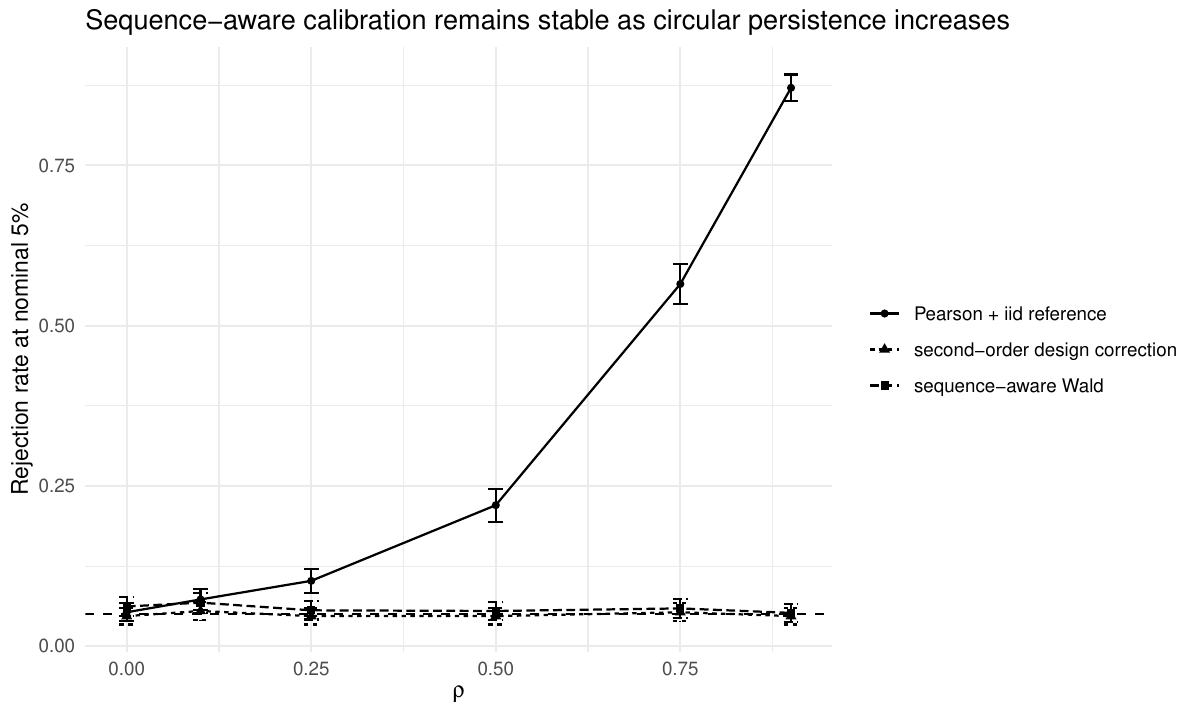}
\caption{Rejection rates at nominal 5\% as the first circular Fourier coefficient $\rho$ increases in the wrapped-Gaussian Markov construction. The marginal log-significand remains exactly uniform for every $\rho$. The second-order design correction and sequence-aware Wald statistic use the observed sequence structure.}
\label{fig:rho-calibration}
\end{figure}

\subsection{Robustness to sequence length, independent-sequence count, and statistic}
\label{sec:robustness}

The primary benchmark uses $B=400$, $L=250$, and hence $n=100{,}000$. To check that the calibration contrast is not a large-$n$ artifact, we rerun the wrapped-Gaussian circular Markov construction at $\rho=0.5$ under two design grids. First, $B=400$ is fixed while $L$ varies from 10 to 250, so $n$ ranges from 4,000 to 100,000. Second, $n=100{,}000$ is held fixed while $B$ ranges from 100 to 2,000 and $L$ changes inversely. Each design uses 500 Monte Carlo replications.

The naive rejection rate does not increase monotonically with total sample size. With $B=400$, it is already 19.6\% at $n=4{,}000$ and ranges from 19.6\% to 26.8\% over the 25-fold change in $n$. The second-order correction ranges from 3.8\% to 6.4\%. Holding $n=100{,}000$ fixed while changing the number of independent sequences by a factor of 20 gives naive rejection rates between 20.4\% and 23.4\%, compared with 4.2--4.8\% for the second-order correction. These results point to the covariance induced by the within-sequence design rather than total record count as the relevant feature in this construction.

\begin{table}[htbp]
\centering
\caption{Sensitivity of rejection rates to $B$, $L$, and $n$ in the wrapped-Gaussian circular Markov construction at $\rho=0.5$. Entries summarize the range over each prespecified design grid; the full grid is reported in the Supplementary Material, Section~S4.}
\label{tab:design-size-summary}
\begin{tabular}{lrrr}
\toprule
Design grid & Pearson i.i.d. & RS2 & Sequence Wald \\
\midrule
$B=400$, $n=4{,}000$--$100{,}000$ & 0.196--0.268 & 0.038--0.064 & 0.056--0.072 \\
$n=100{,}000$, $B=100$--$2{,}000$ & 0.204--0.234 & 0.042--0.048 & 0.040--0.094 \\
\bottomrule
\end{tabular}
\end{table}

The Wald statistic is somewhat more sensitive to the number of independent sequences: its rejection rate reaches 9.4\% in the $B=100$, $L=1{,}000$ design. This finite-cluster behavior is an informative limitation rather than a contradiction. Estimating and inverting an eight-dimensional covariance matrix from relatively few independent clusters is itself an inferential problem; a design-aware method does not remove the need for enough independent sampling units. We do not claim a universal minimum $B$: the relevant threshold depends on the dimension, heterogeneity of the design spectrum, and stability of the estimated covariance.

The same issue is not specific to first-digit binning. The ideal continuous log-significand Cram\'er--von Mises discrepancy is
\[
W_n^2=n\int_0^1\{F_n(u)-u\}^2\,du.
\]
In the numerical experiment we use its explicit grid approximation
\begin{equation}
\widetilde W_{n,M}^2
=
\frac{n}{M+1}\sum_{m=1}^M\{F_n(u_m)-u_m\}^2,
\qquad
u_m=\frac{m}{M+1},
\qquad M=49.
\label{eq:cvm}
\end{equation}
Under the conventional i.i.d. reference, the multiplicative construction rejects 22.2\% of samples and the randomized irrational rotation never rejects. Estimating the sequence-level covariance of the empirical CDF grid and applying the same second-order moment-matching strategy gives rejection rates of 4.8\% and 4.4\%, respectively. The i.i.d. and balanced-mixture corrected rates are 6.2\% and 6.4\%. With 500 replications these deviations from 5\% are of the same order as Monte Carlo uncertainty. The exercise is deliberately a robustness check, not a proposal for an optimal dependent-data CvM procedure: changing the discrepancy statistic does not, by itself, remove the sampling-design problem.

\section{A genuine marginal departure: size and power}
\label{sec:power}

A correction is useful only if it improves null calibration without removing sensitivity to a genuine marginal departure. We therefore keep the same latent $\rho=0.5$ circular Markov construction and alter only its one-time marginal. If $V$ denotes the original uniform log-significand, we transform it through the inverse CDF of
\begin{equation}
f_\epsilon(u)=1+\epsilon(2u-1),\qquad
F_\epsilon(u)=(1-\epsilon)u+\epsilon u^2,
\label{eq:tilt-alt}
\end{equation}
so that $\epsilon=0$ is the exactly Benford dependent null and $\epsilon>0$ is a smooth marginal departure. We use 500 replications at each prespecified value $\epsilon\in\{0,0.005,0.01,0.02,0.03,0.05\}$.

Figure~\ref{fig:power} makes the distinction between size distortion and power explicit. At $\epsilon=0$, the conventional i.i.d. Pearson and CvM references reject about one quarter of samples, whereas the design-aware procedures are near their intended level. In this wrapped-Gaussian design, their rejection probabilities then increase monotonically over the prespecified $\epsilon$ grid: at $\epsilon=.02$, for example, Pearson RS2 rejects 43.8\%, the sequence-aware Wald test 41.2\%, and design-corrected CvM 62.8\%; by $\epsilon=.03$ the corresponding rates are 83.0\%, 79.6\%, and 93.4\%. Thus the higher naive curves cannot be interpreted as superior power, because they begin from a badly inflated null size. Valid calibration does not require discarding the marginal signal.

\begin{figure}[htbp]
\centering
\includegraphics[width=0.86\textwidth]{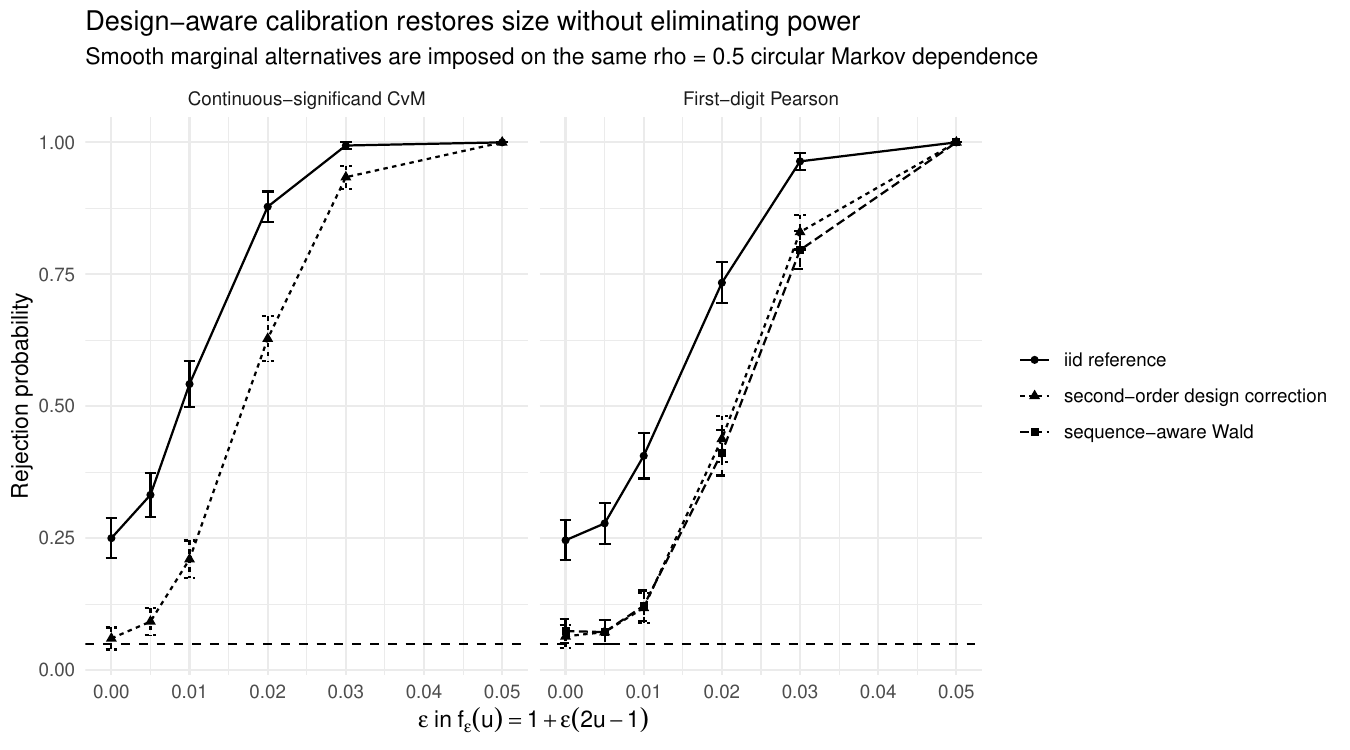}
\caption{Size and power under the smooth marginal alternative in equation~\eqref{eq:tilt-alt}. Both panels use the same latent $\rho=0.5$ circular Markov construction. The horizontal dashed line marks 5\%. Error bars are Monte Carlo 95\% intervals based on 500 replications.}
\label{fig:power}
\end{figure}

\section{What additional information distinguishes the mechanisms}
\label{sec:recover}

If several mechanisms share a marginal law, discrimination requires observations on which they make different predictions. In the benchmark, temporal order distinguishes the dynamic constructions, while conditioning distinguishes latent aggregation.

\subsection{Order information}

For each sequence we summarize changes in logarithmic significand through the first circular transition moment. With equal sequence length $L$, the total number of within-sequence transitions is $N_{\mathrm{tr}}=B(L-1)$, and
\begin{equation}
\widehat m
=
\frac{1}{N_{\mathrm{tr}}}
\sum_{s=1}^B\sum_{t=1}^{L-1}
\exp\{2\pi i(U_{s,t+1}-U_{s,t})\}.
\label{eq:mhat}
\end{equation}
A raw moment can be nonzero simply because a sequence has a non-uniform composition. We therefore subtract its exact conditional expectation under random permutation within each sequence and report
\begin{equation}
C_{\mathrm{centered}}
=
\left|
\widehat m-\E_{\mathrm{shuffle}}(\widehat m)
\right|.
\label{eq:Ccentered}
\end{equation}
The closed-form permutation expectation is given in the Supplementary Material, Section~S4. The statistic is used only as a mechanism-specific diagnostic: it asks whether the observed ordering contains circular transition structure beyond that induced by the within-sequence composition.

Figure~\ref{fig:extra-info}a shows that $C_{\mathrm{centered}}$ is near zero for the i.i.d. and balanced latent-mixture constructions, about 0.49 for the multiplicative process at $\rho=0.5$, and essentially one for the randomized irrational rotation. Order information therefore separates dynamics that a first-digit histogram cannot.

\subsection{Conditioning and aggregation}

The latent construction calls for a different diagnostic. Figure~\ref{fig:extra-info}b shows the distribution of $U$ pooled and conditional on the latent state. The pooled density is uniform, while the two conditional densities are strongly non-uniform and mirror one another. The mechanism is revealed by conditioning, not by retaining more digits from the pooled sample.

\begin{figure}[htbp]
\centering
\begin{subfigure}[t]{0.49\textwidth}
\centering
\includegraphics[width=\textwidth]{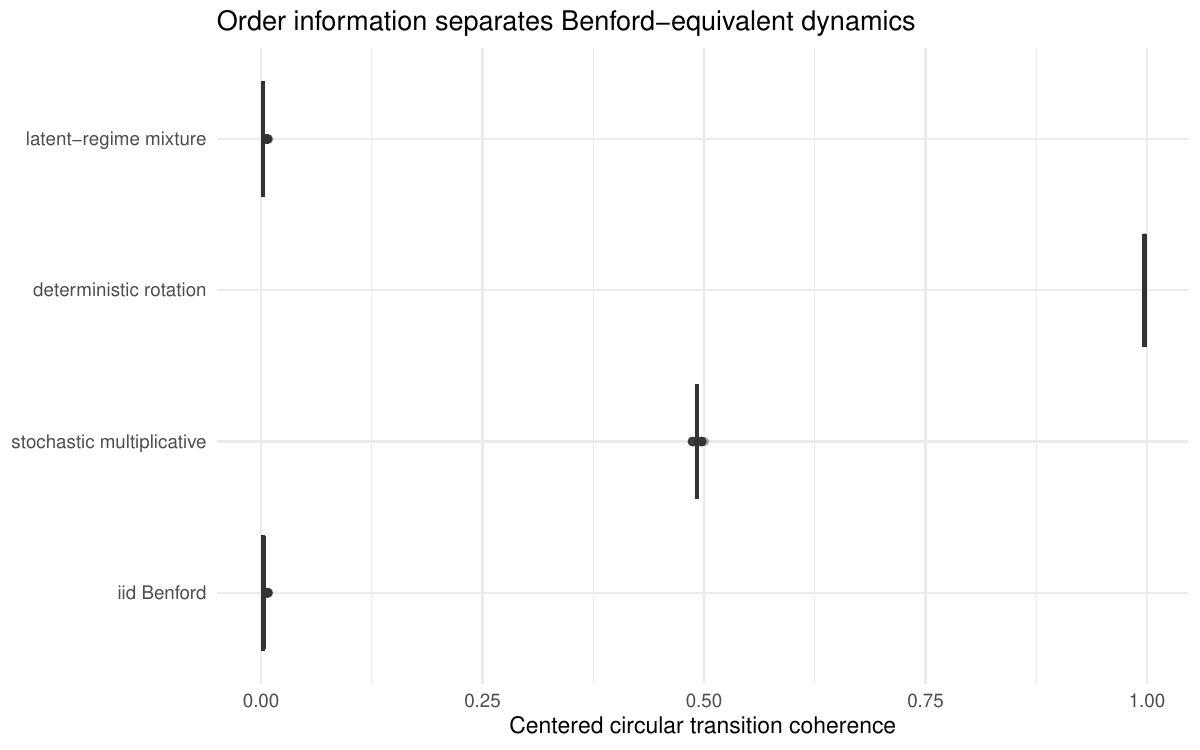}
\caption{Order information.}
\end{subfigure}\hfill
\begin{subfigure}[t]{0.49\textwidth}
\centering
\includegraphics[width=\textwidth]{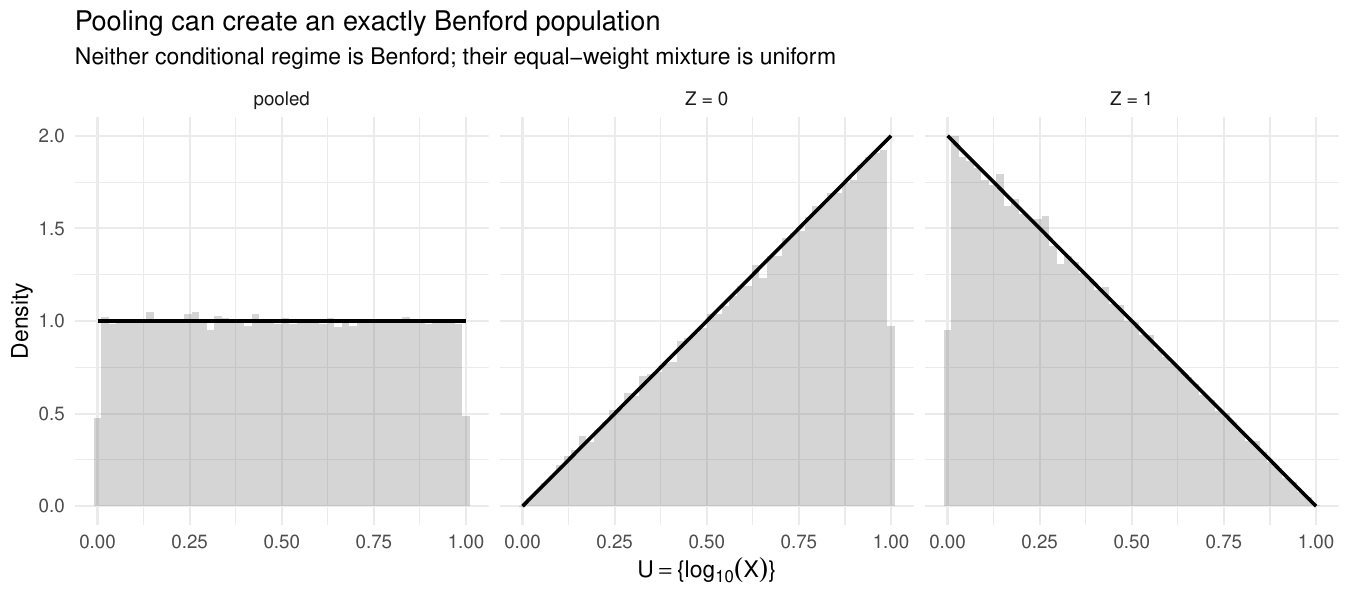}
\caption{Conditioning on the latent regime.}
\end{subfigure}
\caption{Additional information discriminates mechanisms that share the same Benford first-digit target. Panel (a) uses temporal order; panel (b) uses the conditioning variable that generated the mixture. The shortened embedded labels use ``deterministic rotation'' for the random-phase rotation and ``latent-regime mixture'' for the balanced mixture.}
\label{fig:extra-info}
\end{figure}

\subsection{Random cluster composition}
\label{sec:composition}

Balanced pooling was imposed in the primary benchmark so that the two latent regimes contribute equal numbers of observations in every realization. In applications, the number of observed clusters from each regime will usually fluctuate. Restoring that randomness creates a second gap: a superpopulation mixture can be exactly Benford while a realized clustered sample is not exactly balanced.

Let $\widehat\pi$ be the realized fraction of sequences from the first latent regime. This example separates two null statements that coincide under i.i.d. sampling but not under clustered composition:
\[
H_0^{\mathrm{marginal}}:\quad \Pp(U_{s,t}\in A_d)=p_d,
\qquad
H_0^{\mathrm{iid}}:\quad (N_1,\ldots,N_9)\sim\operatorname{Multinomial}(n,p).
\]
The superpopulation construction satisfies the first statement, whereas the second ignores the random sequence-level composition. Conditional on $\widehat\pi$, the pooled first-digit probabilities depart from the equal-weight target whenever $\widehat\pi\ne1/2$. The exact conditional expectation of the Pearson statistic is derived in the Supplementary Material, Section~S2. For the two triangular regimes used here, its dominant displacement term is
\begin{equation}
1.28424\,n\left(\widehat\pi-\frac12\right)^2.
\label{eq:composition-leading}
\end{equation}
The $n$ multiplier makes ordinary cluster-composition fluctuations visible to a test calibrated as if all $n$ observations were independent draws from one fixed multinomial distribution.

With 400 sequences, random Bernoulli assignment to the two regimes yields a mean Pearson statistic of 89.46 and a 76.0\% rejection rate. For comparison, the primary fixed-balance benchmark has Monte Carlo mean 7.676 and rejection rate 3.8\%, with theoretical expectation 7.6789. Figure~\ref{fig:composition} overlays the random-composition simulations with the exact conditional expectation, not a fitted smoother. Moreover, since $\operatorname{Var}(\widehat\pi)=1/(4B)$ under Bernoulli assignment and $n=BL$, equation~\eqref{eq:composition-leading} predicts an average imbalance contribution of approximately
\[
1.28424\,\frac{L}{4}=80.27
\]
when $L=250$, before adding the within-regime variance term. The corresponding unconditional approximation, 87.94, is close to the simulated mean 89.46. The Monte Carlo standard error of that simulated mean is about $111.1/\sqrt{2000}=2.49$, so the 1.52 difference is well within Monte Carlo uncertainty. Thus the effect is governed by the cluster structure, not by treating $n=100{,}000$ records as 100,000 independent draws from a single multinomial distribution.

The sequence-level covariance correction also separates a target discrepancy from a calibration failure in this experiment. Under random Bernoulli composition, the first-order scalar adjustment still rejects 16.2\% of samples because the estimated design-effect spectrum is extremely uneven: its mean is about 10.99, but the second-order effective degrees of freedom are only 1.18. The second-order correction reduces the rejection rate from 76.0\% to 5.0\%, while the sequence-aware Wald test rejects 6.1\%. In the independently seeded fixed-balance calibration experiment of Table~\ref{tab:design-aware}, the corresponding rates are 4.6\% and 6.4\%. These small cross-experiment differences are ordinary Monte Carlo variation. The large naive rejection rate is therefore not evidence that the superpopulation Benford target has failed; it is what results from calibrating a clustered composition experiment as if it were an i.i.d. multinomial sample.

\begin{figure}[htbp]
\centering
\includegraphics[width=0.72\textwidth]{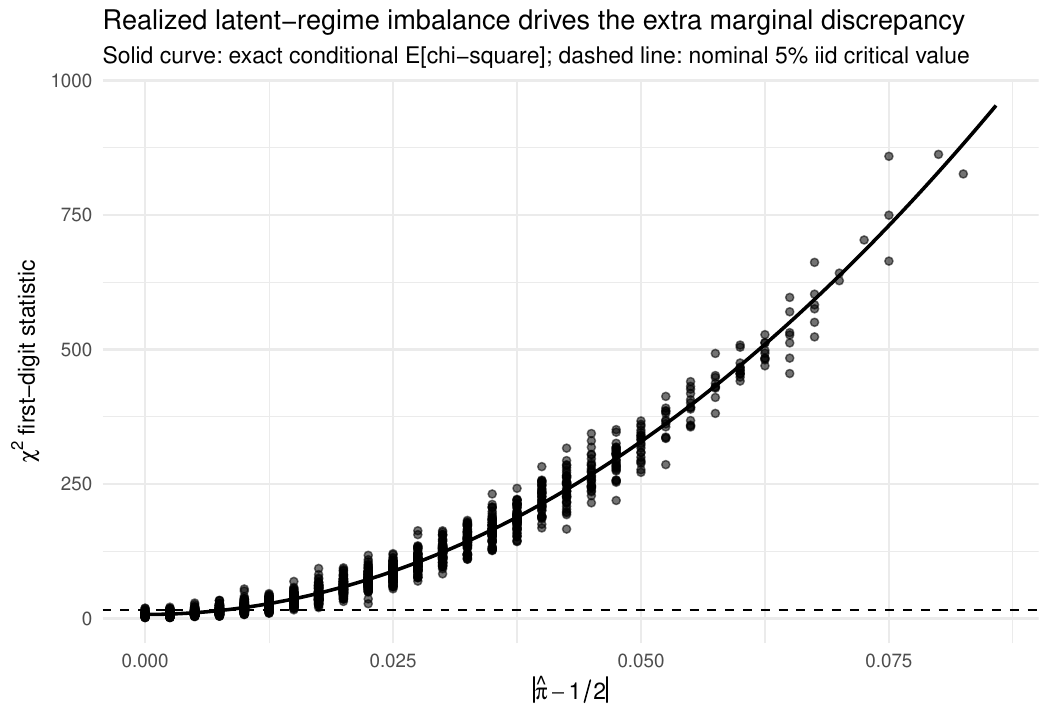}
\caption{First-digit Pearson statistic versus the realized latent-regime imbalance. The solid curve is the exact conditional expectation derived in the Supplementary Material, Section~S2; the dashed line is the conventional 5\% $\chi^2_8$ critical value.}
\label{fig:composition}
\end{figure}

This example is not a claim that arbitrary mixtures should be Benford. It shows that ``the superpopulation mixture is Benford'' and ``the conventional i.i.d. Pearson calibration is valid for the observed clustered sample'' are different statements.

\section{From conformity to integrity: an information boundary}
\label{sec:integrity}

The previous sections show that a correct Benford marginal need not imply a correctly calibrated i.i.d. goodness-of-fit test. For forensic interpretation there is a different limitation: even the complete significand process may contain no information about a relevant difference between two mechanisms.

\paragraph{Observation boundary.}
Let $\mathcal G_n=\sigma(U_1,\ldots,U_n)$ be the information retained by a significand-only analysis. If two mechanisms induce the same law on $\mathcal G_n$---equivalently,
\[
\Law_{M_0}(U_1,\ldots,U_n)
=
\Law_{M_1}(U_1,\ldots,U_n),
\]
---then every $\mathcal G_n$-measurable statistic has the same distribution under the two mechanisms. For a process-level statement, the corresponding requirement is equality of these finite-dimensional laws for every $n$. This is an elementary consequence of measurable transformations, not a new identification theorem. Its role is to mark the information boundary: refining first-digit analysis to second digits or to the complete significand cannot recover differences that lie outside $\mathcal G_n$.

The following construction is best viewed through invariance rather than as a behavioral model of fraud. Write
\[
X_t=10^{K_t+U_t},
\qquad K_t\in\mathbb Z.
\]
Consider the coordinatewise discrete scaling group
\[
(X_1,\ldots,X_n)
\mapsto
(10^{k_1}X_1,\ldots,10^{k_n}X_n),
\qquad k_t\in\mathbb Z.
\]
The vector $U_{1:n}$ is a maximal invariant for this action: two positive magnitude vectors lie on the same orbit exactly when they have the same log-significands coordinate by coordinate. Equivalently, define
\begin{equation}
X_t'=10^{\Delta_t}X_t,
\qquad \Delta_t\in\mathbb Z.
\label{eq:manip}
\end{equation}
Then $X_t'$ and $X_t$ have exactly the same significand observation by observation. Any significand-only test therefore has the same rejection probability along such an orbit; against alternatives that change only the integer log-magnitude component $K_t$, its rejection probability cannot exceed its null rejection probability by using significand information alone. In the numerical illustration, 10\% of observations are shifted upward by two orders of magnitude. The first-digit Pearson statistic and the grid-based CvM discrepancy on $U$ are identical before and after the transformation up to numerical precision, while the distribution of magnitudes changes visibly (Figure~\ref{fig:manip}).

\begin{figure}[htbp]
\centering
\includegraphics[width=0.65\textwidth]{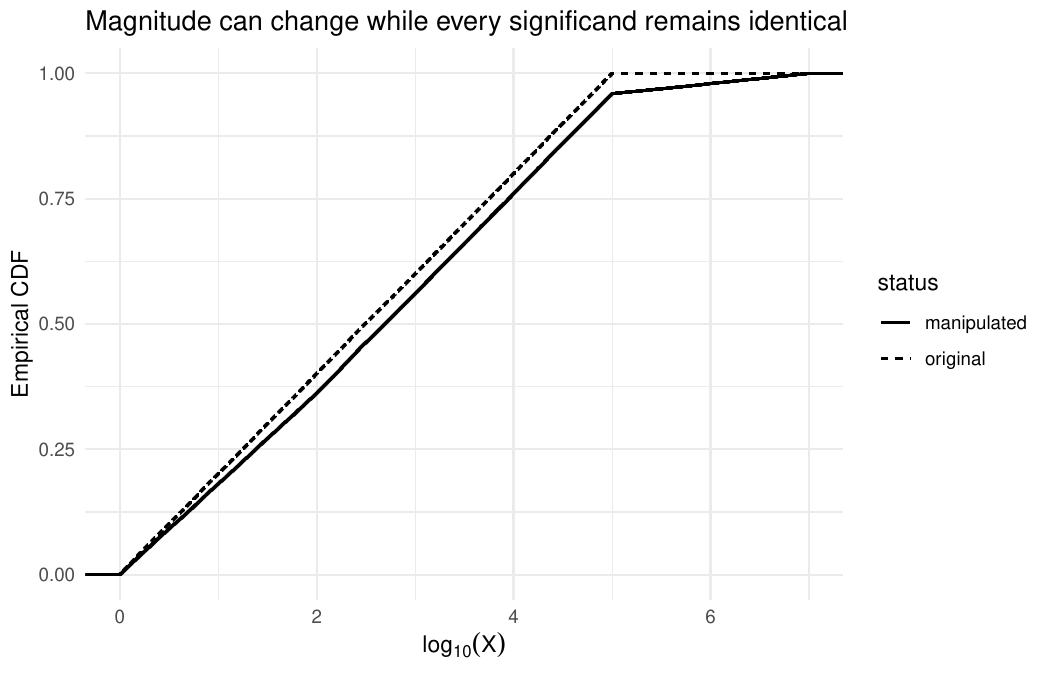}
\caption{A significand-preserving transformation changes magnitudes while leaving the complete significand sequence unchanged. The construction is an impossibility example, not a behavioral model of fraud.}
\label{fig:manip}
\end{figure}

The example should not be read as a claim about how manipulation usually occurs, and the paper does not suggest that real manipulations generally preserve significands. Its purpose is narrower: to show that integrity is not identified from the log-significand information set alone. To make the boundary explicit in the other direction, we add a simple positive control. Starting from an i.i.d. Benford sample, 2\% of observations are replaced by draws from $\Unif(0,0.015)$, a narrow interval near $U=0$ that mimics round-number heaping near powers of ten. This perturbation changes the observed log-significand itself. The first-digit Pearson statistic rises from 8.882 ($p=0.352$) to 117.615, and the continuous CvM statistic rises from 0.078 to 14.676. By contrast, the order-of-magnitude shift leaves both statistics exactly unchanged.

\begin{table}[htbp]
\centering
\caption{Information-boundary illustration. Integer-order magnitude shifts are invisible to log-significand statistics by construction, whereas a small heaping perturbation changes the observed representation. The heaping example is a positive control, not a behavioral model of fraud.}
\label{tab:boundary}
\begin{tabular}{lrrr}
\toprule
Construction & Pearson $X_n^2$ & Pearson $p$ & CvM \\
\midrule
Benford reference & 8.882 & 0.352 & 0.078 \\
Significand-preserving magnitude shift & 8.882 & 0.352 & 0.078 \\
Round-number heaping & 117.615 & $<0.001$ & 14.676 \\
\bottomrule
\end{tabular}
\end{table}

The contrast is deliberately elementary. Digit evidence is informative when competing mechanisms imply different distributions for the observed digit process; it is uninformative about differences that are removed by the map $X\mapsto U$. If legitimate and manipulated processes differ in magnitudes, timing, counterparties, accounting identities, thresholds, or other contextual variables while inducing the same log-significand process, those variables---not a more elaborate digit statistic---must carry the discriminating information.

This also clarifies the role of Benford screening. A digit discrepancy can be useful evidence when a substantive model predicts approximate Benford behavior under legitimate operation and a plausible alternative predicts a departure. The heaping control shows that the information-boundary argument is not a claim that digit analysis has no power. It says instead that such power is alternative-specific. Without substantive competing models, Benford conformity is neither necessary nor sufficient for integrity.

\section{Implications for statistical practice}
\label{sec:practice}

The examples lead to a simple workflow: separate the target distribution, the sampling design, and the substantive interpretation. Table~\ref{tab:practice} summarizes the corresponding actions.

\begin{table}[htbp]
\centering
\small
\caption{Practical interpretation of a Benford discrepancy. The appropriate calibration follows the observation design; the substantive conclusion follows the competing mechanisms.}
\label{tab:practice}
\begin{tabular}{>{\raggedright\arraybackslash}p{0.18\textwidth}>{\raggedright\arraybackslash}p{0.30\textwidth}>{\raggedright\arraybackslash}p{0.39\textwidth}}
\toprule
Setting & What the i.i.d. reference can miss & Practical response \\
\midrule
Serial or ordered data & Within-sequence covariance can inflate or deflate goodness-of-fit statistics. & Preserve order; use sequence/block covariance, a block bootstrap, HAC methods, or simulation from a dynamic null. \\
\addlinespace
Clustered or stratified data & Between-cluster composition and unequal design effects can distort the multinomial reference. & Treat clusters/strata as part of the design; use cluster-level covariance, Rao--Scott-type corrections, or a cluster bootstrap. \\
\addlinespace
Deterministic or quasi-deterministic sequences & A stochastic $\chi^2$ reference may have no natural justification without a randomization mechanism. & State the source of randomness explicitly (for example, phase randomization) or report descriptive discrepancy rather than an automatic $p$-value. \\
\addlinespace
Forensic interpretation & A digit anomaly does not identify its cause, and digit conformity cannot certify features absent from the digit representation. & Specify legitimate and alternative mechanisms and retain contextual variables---magnitudes, timing, counterparties, thresholds, accounting relations---on which they differ. \\
\bottomrule
\end{tabular}
\end{table}

Changing the discrepancy statistic does not solve a sampling problem by itself: the grid-based log-significand CvM experiment shows the same distortion under an i.i.d. reference. Conversely, the benchmark shows that sequence-aware covariance estimation can substantially improve finite-sample calibration when the independent sampling units are correctly identified. The second-order moment-matching correction is more stable than the direct Wald statistic in the smallest-$B$ design studied here; with $B=100$ sequences, the latter rejects 9.4\% of null samples in one design. Large record counts therefore do not substitute for enough independent sampling units.

The practical difficulty is that forensic data do not always arrive with indisputable sequence or cluster boundaries. A transaction file might plausibly be grouped by account, reporting entity, day, counterparty, or administrative unit. If the grouping is misspecified, the estimated covariance need not represent the actual observation law. When natural independent units are not supplied by the design, analysts should justify the grouping substantively and examine sensitivity across plausible clusterings. A model-based dynamic null, HAC/long-run covariance estimate, or block/cluster bootstrap can provide alternative calibrations when their assumptions are more credible. RS2 avoids direct inversion of the full covariance, whereas the Wald procedure requires enough independent units for a stable inverse; neither has a universal minimum-$B$ guarantee.

Diagnosis and attribution remain separate from calibration. Temporal summaries are useful when candidate mechanisms differ in order; conditioning is useful when aggregation is plausible; additional digits or continuous-significand tests are useful only against alternatives that alter the retained significand information. A Benford discrepancy has evidential value when a substantive model explains both why legitimate data should follow the reference and why a relevant alternative should depart from it. The $p$-value alone does not provide that model.

\section{Discussion}
\label{sec:discussion}

Benford's law is a marginal statement; a conventional goodness-of-fit $p$-value is a statement about a sampling model. That distinction is classical in categorical-data and complex-sampling theory, but it is easy to overlook when a familiar digit law is used as an anomaly screen. The controlled constructions here make the distinction unusually transparent: the Benford target is held fixed while persistence, deterministic regularity, and latent clustering alter the sampling distribution of the same discrepancy statistic.

The design-aware experiments add a constructive counterpart. When independent sequence boundaries are observed, the empirical covariance of sequence summaries estimates the weighted-chi-square structure of the Pearson statistic. A second-order Rao--Scott-type moment-matching approximation produces rejection rates close to nominal in the reported benchmark, across wide changes in $B$, $L$, and $n$, and under random mixture composition. The same sampling issue appears for the grid-based log-significand CvM discrepancy. In the wrapped-Gaussian power experiment, correcting size does not eliminate sensitivity to a true marginal departure. These results are not a proposal for a universal Benford test; they illustrate a more general rule that calibration should follow the observation design.

A separate limitation concerns information rather than calibration. If competing mechanisms induce the same complete log-significand process, digit-only statistics cannot distinguish them. The magnitude-shift construction makes that boundary explicit, while the heaping control shows the converse: when an alternative changes the observed significand, digit-based procedures can have substantial power. Forensic interpretation is therefore hypothesis-specific, not a generic consequence of conformity or nonconformity.

\paragraph{Limitations.}
The benchmark is intentionally stylized. The wrapped-Gaussian circular Markov model represents one form of positive circular persistence; the randomized irrational rotation is a deliberately low-discrepancy example rather than a model for deterministic data in general; and the smooth power alternative is only one family of departures. The plug-in moment correction requires correctly identified independent sampling units and a stable estimate of their covariance spectrum. The finite-$B$ experiments show that large record counts do not remove this requirement, especially for the direct Wald statistic. The grid-based CvM calculation is a robustness check rather than a general dependent-data CvM theory. No empirical dataset is used to claim that the four mechanisms describe a particular audit, tax file, scientific dataset, or election.

The conclusion is therefore deliberately narrow: \emph{Benford conformity is a marginal property since a Benford $p$-value is valid only relative to a specified sampling law, statistic, and calibration procedure; and forensic interpretation requires substantive competing models.}

\bibliographystyle{asa-tas}
\bibliography{benford-references}

\appendix

\section{Analytical expectation of the Pearson statistic}
\label{app:pearson}

Let $Y_t^{(d)}=\mathbf 1\{U_t\in A_d\}$ and $N_d=\sum_tY_t^{(d)}$. In each primary construction, $\E(N_d)=np_d$, so
\begin{equation}
\E(X_n^2)
=
\sum_{d=1}^9\frac{\operatorname{Var}(N_d)}{np_d}.
\label{eq:ETgeneral}
\end{equation}
This identity makes explicit that the mean Pearson statistic depends on the joint sampling law even when the marginal probabilities are held fixed.

\subsection{Independent reference}

Under independent sampling,
\[
\operatorname{Var}(N_d)=np_d(1-p_d),
\]
so
\begin{equation}
\E(X_n^2)
=
\sum_{d=1}^9(1-p_d)
=8.
\label{eq:ETiid}
\end{equation}

\subsection{Wrapped-Gaussian multiplicative process}

Write
\[
a_d=\log_{10}d,
\qquad
b_d=\log_{10}(d+1),
\qquad
A_d=[a_d,b_d).
\]
For the digit interval $A_d$, define the Fourier coefficient
\begin{equation}
c_{dk}
=
\int_{a_d}^{b_d}e^{-2\pi iku}\,du
=
\frac{e^{-2\pi i k b_d}-e^{-2\pi i k a_d}}{-2\pi i k},
\qquad k\ne0.
\label{eq:cdk}
\end{equation}
For the wrapped-Gaussian increment kernel in equation~\eqref{eq:rho},
\[
\psi_k
=
\E(e^{2\pi ik\varepsilon})
=
\rho^{k^2}.
\]
Because the wrapped-Gaussian increment distribution is symmetric, each $\psi_k$ is real. Stationarity and Fourier orthogonality give the full lag-$h$ cross-covariance
\begin{equation}
\Gamma_h(d,d')
=
\operatorname{Cov}(Y_t^{(d)},Y_{t+h}^{(d')})
=
\sum_{k\ne0}
c_{d,k}\overline{c_{d',k}}\,\psi_k^h,
\label{eq:gamma-matrix}
\end{equation}
whose diagonal entries are
\begin{equation}
\gamma_d(h)
=
\Gamma_h(d,d)
=
\sum_{k\ne0}|c_{dk}|^2\psi_k^h.
\label{eq:gammad}
\end{equation}
For a sequence of length $L$, the covariance appearing in Proposition~\ref{prop:cluster-clt} is therefore
\begin{equation}
\Omega_L
=
\Gamma_0
+
\sum_{h=1}^{L-1}
\left(1-\frac{h}{L}\right)
\{\Gamma_h+\Gamma_h^\top\}.
\label{eq:omega-longrun}
\end{equation}
This full matrix, not only the cellwise variances, determines the eigenvalue spectrum in equation~\eqref{eq:weightedchisq}. With $B$ independent sequences of length $L$,
\begin{equation}
\operatorname{Var}(N_d)
=
B\left[
Lp_d(1-p_d)
+2\sum_{h=1}^{L-1}(L-h)\gamma_d(h)
\right].
\label{eq:varmarkov}
\end{equation}
Numerical evaluation of the rapidly convergent Fourier sum gives $\E(X_n^2)=11.7106$ for $L=250$ and $\rho=0.5$, compared with the Monte Carlo mean 11.5110. Using the simulated standard deviation 6.620 from Table~\ref{tab:main}, the Monte Carlo standard error of that mean is $6.620/\sqrt{2000}=0.148$; the theory--simulation difference is therefore about 1.35 Monte Carlo standard errors.

\subsection{Irrational rotation}

For the rotation in equation~\eqref{eq:rotation},
\begin{equation}
\gamma_d(h)
=
\lambda\{A_d\cap(A_d-h\theta)\}-p_d^2,
\label{eq:gammarotation}
\end{equation}
where the intersection is taken on the unit circle. Equation~\eqref{eq:varmarkov} then applies with this deterministic covariance. Exact circular-interval intersections yield $\E(X_n^2)=0.2187$, compared with the Monte Carlo mean 0.2228.

\subsection{Balanced latent mixture}

Let
\[
p_{zd}=\Pp(U\in A_d\mid Z=z),
\qquad z\in\{0,1\}.
\]
The balanced design has exactly $B/2$ sequences in each regime and $p_d=(p_{0d}+p_{1d})/2$. Conditional independence within regimes gives
\begin{equation}
\operatorname{Var}(N_d)
=
\frac{BL}{2}\left[
p_{0d}(1-p_{0d})+p_{1d}(1-p_{1d})
\right].
\label{eq:varlatent}
\end{equation}
The resulting analytical expectation is 7.6789, compared with the Monte Carlo mean 7.6765.

\section{Random composition of latent regimes}
\label{app:composition}

Suppose the sequence-level regime is instead drawn independently with probability $1/2$, and condition on the realized fraction $\widehat\pi$ in regime 0. Then
\[
\E(N_d\mid\widehat\pi)
=
n\{\widehat\pi p_{0d}+(1-\widehat\pi)p_{1d}\},
\]
and
\[
\operatorname{Var}(N_d\mid\widehat\pi)
=
n\{\widehat\pi p_{0d}(1-p_{0d})+(1-\widehat\pi)p_{1d}(1-p_{1d})\}.
\]
Decomposing the Pearson expectation into a variance term and a squared displacement from the equal-weight target gives
\begin{align}
\E(X_n^2\mid\widehat\pi)
={}&
\sum_{d=1}^9
\frac{
\widehat\pi p_{0d}(1-p_{0d})+(1-\widehat\pi)p_{1d}(1-p_{1d})
}{p_d}
\notag\\
&+
 n\left(\widehat\pi-\frac12\right)^2
\sum_{d=1}^9\frac{(p_{0d}-p_{1d})^2}{p_d},
\label{eq:composition-exact}
\end{align}
where $p_d=(p_{0d}+p_{1d})/2$. With $a_d=\log_{10}d$ and $b_d=\log_{10}(d+1)$, the two triangular regimes give
\[
p_{0d}=b_d^2-a_d^2,
\qquad
p_{1d}=2(b_d-a_d)-(b_d^2-a_d^2).
\]
Hence
\[
\sum_{d=1}^9\frac{(p_{0d}-p_{1d})^2}{p_d}
\approx 1.28424.
\]
The leading imbalance contribution is therefore equation~\eqref{eq:composition-leading}. Under independent Bernoulli assignment of the $B$ sequence labels, $\E[(\widehat\pi-1/2)^2]=1/(4B)$. Since $n=BL$, the expected imbalance contribution is therefore
\begin{equation}
1.28424\,\frac{L}{4},
\label{eq:composition-unconditional}
\end{equation}
which equals 80.27 for $L=250$. Adding the expected within-regime variance term, 7.6789, gives 87.94, close to the Monte Carlo mean 89.46 from 2,000 random-composition realizations. Equation~\eqref{eq:composition-unconditional} also makes the asymptotic regime explicit. With fixed cluster length $L$ and $B\to\infty$, the additional mean contribution remains of order one rather than growing with the total record count; if $L$ grows, it increases proportionally to $L$. The relevant uncertainty is therefore cluster-level composition, not the number of individual records alone.

\section{A game-theoretic route to the same invariant law}
\label{app:game}

Morrison's multiplication game provides a qualitatively different route to the same Benford/Haar distribution \citep{morrison2010}. On the logarithmic circle, multiplication becomes addition modulo one. For the winning set
\[
W=[0,\log_{10}4),
\]
corresponding to product significands beginning with 1, 2, or 3, Haar randomization by either player fixes the winning probability at $\lambda(W)=\log_{10}4\approx0.60206$ regardless of the opponent's action. Figure~\ref{fig:game-static} illustrates this minimax invariance against four deliberately non-Haar strategies. The Haar row and column stay near the theoretical value, whereas non-Haar pairs can differ substantially.

\begin{figure}[htbp]
\centering
\includegraphics[width=0.78\textwidth]{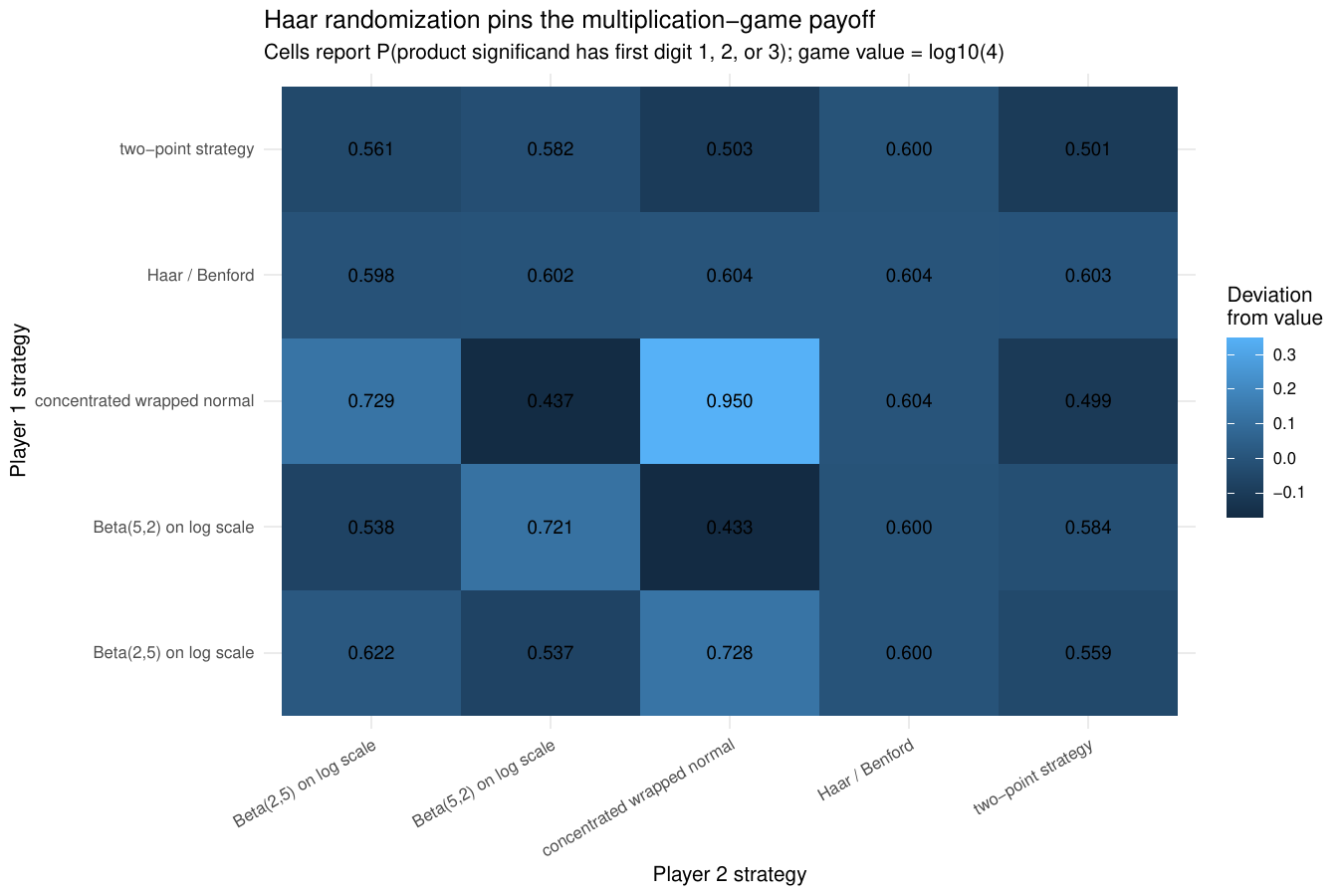}
\caption{Static multiplication game. Haar/Benford randomization fixes the payoff near $\log_{10}4$ regardless of the other player's mixed strategy.}
\label{fig:game-static}
\end{figure}

The reproducible supplement also contains two dynamic illustrations. First, in
\[
U_{t+1}=(U_t+A_t+B_t)\bmod1,
\]
a fresh Haar action by one player makes $U_{t+1}$ uniform conditional on the current state and the opponent's action: translation invariance produces a one-step ``washout.'' Second, if both players use independent wrapped-normal, non-Haar actions, the magnitude of the first circular Fourier coefficient of the state distribution decays geometrically at rate
\[
\eta=\exp\{-2\pi^2(\sigma_A^2+\sigma_B^2)\}.
\]
These examples are included only to emphasize equifinality: strategic invariance and repeated non-Haar interaction can lead to the same marginal law as the nonstrategic constructions in the main text. They are not used in any calibration result.

\section{Additional numerical results}
\label{app:pass3}

\subsection{Exact permutation centering for transition coherence}

For equal-length sequences, let $z_{s,t}=\exp(2\pi iU_{s,t})$. Conditional on the observed values in sequence $s$, the expected complex transition moment after a uniformly random permutation is
\[
\E_{\mathrm{shuffle}}
\left[
\frac{1}{L-1}\sum_{t=1}^{L-1}z_{s,t+1}\overline{z}_{s,t}
\right]
=
\frac{|\sum_{t=1}^Lz_{s,t}|^2-L}{L(L-1)}.
\]
The centered coherence reported in the paper subtracts the average of this exact sequence-level baseline before taking the modulus. This removes coherence attributable only to the composition of a sequence while retaining ordered transition structure.

\subsection{Full \texorpdfstring{$B$, $L$, and $n$}{B, L, and n} sensitivity grid}

Table~\ref{tab:design-size-full} reports all designs summarized in Table~\ref{tab:design-size-summary}. The duplicated $(B,L)=(400,250)$ design appears in two independently seeded grids and consequently differs by ordinary Monte Carlo error.

\begin{table}[htbp]
\centering
\small
\caption{Full design-size sensitivity grid for the wrapped-Gaussian circular Markov construction at $\rho=0.5$ (500 replications per row).}
\label{tab:design-size-full}
\begin{tabular}{lrrrrrr}
\toprule
$(B,L)$ & $n$ & Pearson & RS1 & RS2 & Wald & Mean $\lambda$ \\
\midrule
$(400,10)$ & 4,000 & 0.196 & 0.046 & 0.038 & 0.072 & 1.375 \\
$(400,25)$ & 10,000 & 0.250 & 0.054 & 0.048 & 0.064 & 1.432 \\
$(400,50)$ & 20,000 & 0.202 & 0.060 & 0.054 & 0.062 & 1.450 \\
$(400,100)$ & 40,000 & 0.268 & 0.082 & 0.064 & 0.056 & 1.456 \\
$(400,250)$ & 100,000 & 0.212 & 0.078 & 0.058 & 0.058 & 1.462 \\
\addlinespace
$(100,1000)$ & 100,000 & 0.234 & 0.062 & 0.046 & 0.094 & 1.469 \\
$(200,500)$ & 100,000 & 0.212 & 0.052 & 0.042 & 0.060 & 1.468 \\
$(400,250)$ & 100,000 & 0.234 & 0.064 & 0.048 & 0.054 & 1.461 \\
$(1000,100)$ & 100,000 & 0.210 & 0.054 & 0.048 & 0.040 & 1.461 \\
$(2000,50)$ & 100,000 & 0.204 & 0.062 & 0.042 & 0.044 & 1.450 \\
\bottomrule
\end{tabular}
\end{table}

\subsection{Continuous-log-significand calibration and power}

The grid-based CvM calculation in equation~\eqref{eq:cvm} is used only to check that the sampling-design distinction is not a consequence of first-digit binning. Table~\ref{tab:cvm-calibration-app} reports its null calibration.

\begin{table}[htbp]
\centering
\caption{Nominal 5\% rejection rates for the continuous-log-significand Cram\'er--von Mises statistic under the conventional i.i.d. reference and a sequence-aware second-order covariance calibration (500 replications).}
\label{tab:cvm-calibration-app}
\begin{tabular}{lrr}
\toprule
Construction & i.i.d. reference & Design correction \\
\midrule
i.i.d. Benford & 0.056 & 0.062 \\
Stochastic multiplicative & 0.222 & 0.048 \\
Randomized irrational rotation & 0.000 & 0.044 \\
Balanced latent mixture & 0.030 & 0.064 \\
\bottomrule
\end{tabular}
\end{table}

Table~\ref{tab:power-full} reports the complete power grid used for Figure~\ref{fig:power}.

\begin{table}[htbp]
\centering
\caption{Rejection probabilities under the smooth marginal alternative $f_\epsilon(u)=1+\epsilon(2u-1)$ with the same latent $\rho=0.5$ circular Markov construction (500 replications per row).}
\label{tab:power-full}
\begin{tabular}{rrrrrr}
\toprule
$\epsilon$ & Pearson i.i.d. & Pearson RS2 & Pearson Wald & CvM i.i.d. & CvM design \\
\midrule
0.000 & 0.246 & 0.064 & 0.074 & 0.250 & 0.060 \\
0.005 & 0.278 & 0.072 & 0.072 & 0.332 & 0.092 \\
0.010 & 0.406 & 0.118 & 0.122 & 0.542 & 0.210 \\
0.020 & 0.734 & 0.438 & 0.412 & 0.878 & 0.628 \\
0.030 & 0.964 & 0.830 & 0.796 & 0.994 & 0.934 \\
0.050 & 1.000 & 1.000 & 1.000 & 1.000 & 1.000 \\
\bottomrule
\end{tabular}
\end{table}

\subsection{Other diagnostics}

The reproducible companion also reports the Fourier discrepancy
\[
B_5
=
\sum_{k=1}^5
\left|
\frac1n\sum_{j=1}^n e^{2\pi ikU_j}
\right|^2,
\]
together with pairwise rank-discrimination summaries and sensitivity of $C_{\mathrm{centered}}$ to $\rho$. These diagnostics are not used to calibrate any generator. They support the same distinction made in the main text: empirical log-significand statistics can discriminate sampling laws even when the population marginal is identical, while order-sensitive diagnostics directly recover structure that marginal summaries discard.

Figure~\ref{fig:rho-coherence} shows that the centered transition coherence tracks the first circular Fourier coefficient over the sensitivity grid, with the largest finite-block deviation at high persistence.

\begin{figure}[htbp]
\centering
\includegraphics[width=0.68\textwidth]{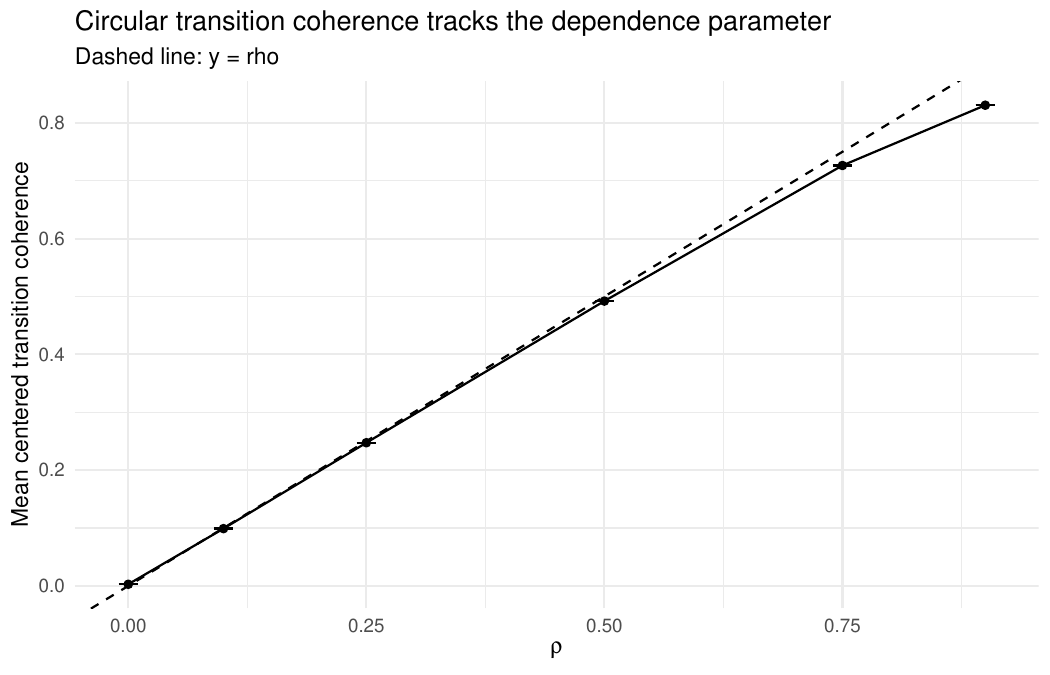}
\caption{Centered circular transition coherence across the wrapped-Gaussian persistence grid. The dashed line is $y=\rho$.}
\label{fig:rho-coherence}
\end{figure}

\end{document}